\newtheorem{theorem}{Theorem}
\begin{document}

\title{Extreme Value Theory Based Rate Selection for Ultra-Reliable Communications\\
}


\author{Niloofar~Mehrnia,~\IEEEmembership{Student Member,~IEEE,}
        Sinem~Coleri,~\IEEEmembership{Fellow,~IEEE}
\thanks{Copyright (c) 2015 IEEE. Personal use of this material is permitted. However, permission to use this material for any other purposes must be obtained from the IEEE by sending a request to pubs-permissions@ieee.org.}
\thanks{Niloofar Mehrnia and Sinem Coleri are with the Department of Electrical and Electronics Engineering,
Koc University, Istanbul, Turkey (e-mail: nmehrnia17@ku.edu.tr; scoleri@ku.edu.tr).}
\thanks{Niloofar Mehrnia is also with Koc University Ford Otosan Automotive Technologies Laboratory (KUFOTAL), Sariyer, Istanbul, Turkey, 34450.}
\thanks{Sinem Coleri acknowledges the support of Ford Otosan.}
}

\maketitle

\begin{abstract}
Ultra-reliable low latency communication (URLLC) requires the packet error rate to be on the order of $10^{-9}$-$10^{-5}$. Determining the appropriate transmission rate to satisfy this ultra-reliability constraint requires deriving the statistics of the channel in the ultra-reliable region and then incorporating these statistics into the rate selection. In this paper, we propose a framework for determining the rate selection for ultra-reliable communications based on the extreme value theory (EVT). We first model the wireless channel at URLLC by estimating the parameters of the generalized Pareto distribution (GPD) best fitting to the tail distribution of the received powers, i.e., the power values below a certain threshold. Then, we determine the maximum transmission rate by incorporating the Pareto distribution into the rate selection function. Finally, we validate the selected rate by computing the resulting error probability. Based on the data collected within the engine compartment of Fiat Linea, we demonstrate the superior performance of the proposed methodology in determining the maximum transmission rate compared to the traditional extrapolation-based approaches. 
\end{abstract}

\begin{IEEEkeywords}
Extreme value theory, outage probability, rate selection function, ultra-reliable communication, URLLC.
\end{IEEEkeywords}

\section{Introduction}
\IEEEPARstart{U}{ltra}-reliable low latency communication (URLLC) is one of the most important features of the fifth-generation ($5$G) networks with the aim of supporting mission critical applications, such as remote control of robots, remote surgery, autonomous vehicles and vehicular teleoperation applications \cite{interface_01}-\nocite{urllc_02}\cite{urllc_05}. At URLLC, the packet error rate (PER) is guaranteed to be as low as $10^{-9}$-$10^{-5}$ to address the strict reliability constraint, while the acceptable latency is on the order of a few milliseconds \cite{interface_01}-\nocite{5G_01}\nocite{interface_02}\nocite{urllc_02}\nocite{MehrniaTWC}\cite{MehrniaTVT}. Establishing a URLLC system necessitates the statistical modeling of the wireless channel tail quantifying the statistics of extreme events \cite{urllc_05}-\nocite{MehrniaTWC}\cite{MehrniaTVT}, and the transmission strategies in the ultra-reliable region \cite{urllc_05}.

Previous studies on the statistical modeling of the wireless channel for ultra-reliable communications extrapolate a wide range of practically important channel models to the ultra-reliability region \cite{urllc_02}, \cite{urllc_05}. A simple power-law expression is proposed to estimate the tail of the cumulative distribution function (CDF) of the received power by extrapolating the commonly used practical channel models.
Only recently, we have demonstrated that these extrapolated distributions are not accurate in the ultra-reliable region and can result in several orders of magnitude difference in the estimated packet error probabilities \cite{MehrniaTWC}-\cite{MehrniaTVT}. We have proposed the usage of fading statistics from extreme value theory (EVT) and developed a novel methodology to derive these statistics efficiently with minimum amount of data. 
Nevertheless, none of these studies determine the maximum transmission rate in URLLC or assess the system reliability by considering ultra-reliable channel statistics.

Calculation of the reliability and rate selection for ultra-reliable communications has been addressed in the context of proposing new channel parameters and new performance measures. The definition of coherence time/distance has been modified as the time or distance over which a channel is predictable with a given reliability in \cite{urllc_07}-\cite{urllc_08}. New performance measures are defined as average reliability for dynamic environments where the channel changes frequently; and probably correct reliability for the static environments where the channel statistics remain constant for a large enough time interval \cite{urllc_05}.
However, these reliability measures have been derived by using the extrapolation of the traditional average statistic based channel models, which may not be accurate in the ultra-reliable region. 
Deriving outage probability for assessing the reliability of a wireless channel model is immensely important as uncertainty may degrade the communication performance by several orders of magnitude \cite{urllc_05}. It is worth noting that the asymptotic idea of outage probability, which is widely employed in wireless communication systems, is an accurate performance criterion even in the finite blocklength context \cite{blocklength}.

The goal of this study is to propose a novel EVT-based framework for the estimation and validation of the optimal transmission rate for ultra-reliable communications. 
The original contributions of the paper are listed as follows:

\begin{itemize}
    \item We propose a novel framework based on the modeling of the tail distribution of the channel by using the GPD, determination of the optimum transmission rate by using the estimated values of the GPD parameters, and then assessment of the system reliability by means of the outage probability metric. 
    \item We formulate the rate selection function of GPD by incorporating the estimated Pareto parameters into the optimum transmission rate that guarantees the ultra-reliability constraints. 
    \item We assess the reliability of the system by calculating the outage probability based on the rate selection function for the GPD.
    \item We demonstrate the superiority of the proposed methodology in terms of the reliability assessment, compared to the conventional method based on the extrapolation of the average statistics to the ultra-reliable region, over the data collected within the engine compartment of Fiat Linea under various engine vibrations and driving scenarios.
\end{itemize}

The rest of the paper is organized as follows. Section \ref{sec:system_model} describes the system model and assumptions considered throughout the paper. Section~\ref{sec:framework} presents the EVT-based framework for determining and validating the rate selection for the tail distribution of the channel by using GPD.
Section \ref{sec:numerical_results} provides the channel measurement setup and the performance evaluation in determining the optimum rate and outage probability. Finally, concluding remarks and future works are given in Section \ref{sec:conclusions}.
\section{System Model}
\label{sec:system_model}
In this study, we consider an ultra-reliable communication system experiencing significant fading, resulting in extremely low received power values. We assume that the transmitter (Tx) sends a packet to a receiver (Rx) at the rate $R$ over an unknown channel. The receiver estimates the parameters of the GPD fitted to the channel tail distribution by applying EVT to the received power values. Then, the transmitter determines the transmission rate based on these estimated parameter values.

We assume that the channel is stationary, i.e., the distribution function of the received powers does not change over time, and the parameters specifying the distribution class are fixed over time. 
In the case of non-stationarity based on the results of the Augmented Dickey-Fuller (ADF) test, the external factors varying the parameters of the GPD are determined such that the sequence is divided into $M$ groups, in which the channel can be considered stationary, as detailed in \cite{MehrniaTVT}.
Also, the transmit power is fixed and known in advance. In such a case, using the received signal power is equivalent to using the squared amplitude of the channel state information \cite{urllc_05}, \cite{MehrniaTWC}.

To study the outage probability and define the maximum transmission rate, we assume that the dominant source of error in block fading channels is the link outage, where we neglect other sources such as environmental noise. This model has been shown to be very suitable in the transmission of short packets in URLLC scenarios \cite{urllc_02}-\cite{urllc_05}, \cite{urllc_noise}. Prior to transmission, at the training phase and after collecting the stationary channel samples by Rx, the channel sequence is converted into $n$ independent and identically distributed (i.i.d.) sample sequence by using declustering method \cite{MehrniaTWC}, \cite{evt_04}, denoted by $X^n = \{x_1, ...,x_n\}$, where $x_{i}$ is the $i^{th}$ i.i.d. sample, $i \in \{1,...,n\}$. Let $F$ be the CDF of the training samples $X^n$. The training phase uses either the dedicated pilot signals and training sequences or, the history of previous data transmissions and the associated feedback. 
The link outage is defined as
\begin{equation}
\label{eqn:outage}
    R(X^{n})>\log_2(1+Z),
\end{equation}
where $R(X^{n})$ denotes the optimum transmission rate estimated based on the $n$ training received power samples, and $Z$ is the received power from the test samples. Note that in (\ref{eqn:outage}), unit bandwidth is assumed; hence, $R(.)$ represents the transmission rate per unit bandwidth, i.e., spectral efficiency, in bits/sec. Then, according to (\ref{eqn:outage}), the outage probability at transmission rate $R(X^{n})$ is defined as
\begin{equation}
\label{eqn:outageprob}
    p_{F}(R(X^{n})) = P\big[ R(X^{n})>\log_2(1+Z) \big]. 
\end{equation}

The goal of ultra-reliable communication is to choose the maximal rate that meets a predetermined reliability constraint, such that
\begin{equation}
\label{eqn:pfRxn}
    p_F(R(X^{n})) \leq \epsilon.
\end{equation}
Guaranteeing constraint (\ref{eqn:pfRxn}), the transmitter determines the maximum rate as a function of $F$ as follows \cite{urllc_05}:
\begin{align}
\label{eqn:eoutage}
\begin{split}
   R_{\epsilon}(F) = sup\big\{ R(X^{n}) \ge 0 : p_{F}(R(X^{n})) \le \epsilon\big\}\\
   \hfill \hspace{0.3cm}
   = \log_{2} \big(1+F^{-1}(\epsilon)\big),
\end{split}
\end{align}
where $R_{\epsilon}(F)$ denotes $\epsilon$-outage capacity, and $F^{-1}(\epsilon)$ is the $\epsilon$-quantile of $F$.
\section{Rate Selection Framework}
\label{sec:framework}
In our EVT-based framework, the rate selection function of GPD is formulated by incorporating the estimated Pareto parameters, i.e., scale and shape parameters, into the derivation of the optimum transmission rate guaranteeing ultra-reliability. First, received power samples are collected in the training phase.
Afterwards, the training samples are converted to $n$ i.i.d. samples by using the declustering method \cite{MehrniaTWC}, \cite{evt_04}, as an input to the EVT process. Then, GPD is fitted to the lower tail of the i.i.d. sequence.
Modeling the lower tail is followed by formulating the rate selection function of GPD. Finally, the channel reliability is assessed by comparing the estimated outage probability obtained in the test phase with the targeted PER. The proposed algorithm is depicted in Fig.~\ref{fig:framework} and explained in detail next.
\begin{figure*}[ht] 
\centering{
\includegraphics[scale=0.7]{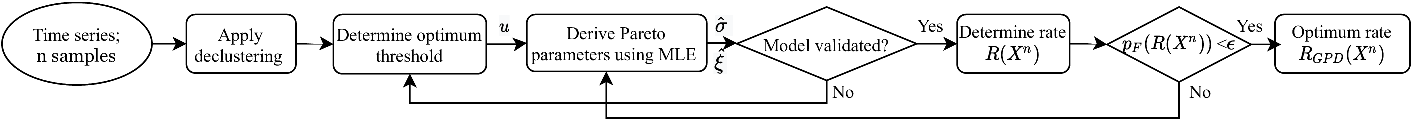}
\caption{Flowchart of the proposed rate selection framework.}
\label{fig:framework}}
\end{figure*}
\subsection{Channel Estimation}
\label{sec:channelestimation}
EVT provides a robust framework for analyzing the statistics of extreme events happening rarely through modeling the probabilistic distribution of the values exceeding a given threshold by using the GPD \cite{MehrniaTWC}, \cite{evt_04}. Assume that $X^n=\{x_1,...,x_n\}$ is an i.i.d. stationary sequence of received powers $x_{i}$ for $i \in \{1,...,n\}$. Accordingly, the probabilistic distribution of the power values exceeding a given threshold $u$ can be expressed as
\begin{equation}
\label{eqn:gpddist}
    F_{u}(y) = 1-\big[1+\frac{\xi y}{\tilde{\sigma}_{u}}\big]^{-1/\xi},
\end{equation}
where $y$ is a non-negative value denoting the exceedance below threshold $u$, i.e., ($y=u-X$, $X$ is any $x_i$ below threshold $u$); and $\xi$ and $\tilde{\sigma}_{u}=\sigma+\xi(u-\mu)$ are shape and scale parameters of the GPD, respectively, and $\mu$ and $\sigma$ are the location and scale parameters of the generalized extreme value (GEV) distribution fitted to the CDF of $M_n = max \{ x_1,...,x_n\}$, respectively \cite[Theorem~1]{MehrniaTWC}, \cite{evt_11}.

To model the tail distribution of the received power sequence, the measured samples are converted into an i.i.d. sequence by removing their dependency using the declustering approach \cite{MehrniaTWC}. Then, EVT is applied to the i.i.d. samples for optimum threshold determination and the estimation of Pareto distribution parameters by using the MLE. The optimum threshold is determined by utilizing two complementary methods, mean residual life (MRL) and parameter stability methods \cite{MehrniaTWC}, \cite{evt_04}. The MLE estimates of the Pareto distribution parameters are formulated as follows.

\begin{theorem}
Let GPD($\sigma$,$\xi$) be the Pareto model fitted to the training samples $X^{n} = \{x_1,x_2,...,x_n\}$. Then, the MLE of $\sigma$ and $\xi$ can be obtained as
\begin{equation}
\label{eqn:estimatedparam}
    \begin{array}{ll}
     \hat{\xi} = \frac{1}{k}\big[\sum_{i=1}^{k}\log\big(1+\hat{\theta}y_i\big)\big];  \\
     \hat{\sigma} = \frac{\hat{\xi}}{\hat{\theta}}
\end{array}
\end{equation}
where $k$ is the number of samples in the tail, i.e., exceeding the optimum threshold; $y_i = u-x_{i}$ for all $x_{i}<u$, where $i\in \{1,...,n\}$; and $\hat{\theta}$ is the root of the following equation:
\begin{equation}
\label{eqn:psi}
    \Psi_{k}(\hat{\theta}) = \Big[\frac{1}{k} \sum_{i=1}^{k} (1+\hat{\theta}y_i)^{-1}  (\frac{1}{k} \sum_{i=1}^{k} \log(1+\hat{\theta}y_i) +1 ) \Big]{-1}.
\end{equation}
\end{theorem}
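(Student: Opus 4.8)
The plan is to carry out a maximum-likelihood derivation for the GPD fitted to the $k$ tail exceedances $y_1,\dots,y_k$, organized around a reparametrization that decouples the scale from the argument of the logarithm. First I would differentiate the cumulative form in (\ref{eqn:gpddist}) to obtain the GPD density $f(y;\sigma,\xi)=\frac{1}{\sigma}\bigl(1+\xi y/\sigma\bigr)^{-1/\xi-1}$, writing $\sigma$ for the tail scale $\tilde{\sigma}_{u}$, and then form the log-likelihood of the exceedances,
\[
\ell(\sigma,\xi)=-k\log\sigma-\Bigl(\tfrac{1}{\xi}+1\Bigr)\sum_{i=1}^{k}\log\Bigl(1+\tfrac{\xi y_i}{\sigma}\Bigr).
\]

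The key device is the substitution $\theta=\xi/\sigma$, so that $1+\xi y_i/\sigma=1+\theta y_i$ and the scale no longer appears inside the sum; the log-likelihood becomes $\ell(\theta,\xi)=-k\log(\xi/\theta)-(\xi^{-1}+1)\sum_i\log(1+\theta y_i)$. Setting $\partial\ell/\partial\xi=0$ then yields directly the first claimed identity, $\hat\xi(\theta)=\frac1k\sum_{i=1}^{k}\log(1+\theta y_i)$, since the stationarity condition reduces to $k=\xi^{-1}\sum_i\log(1+\theta y_i)$. The relation $\hat\sigma=\hat\xi/\hat\theta$ is then just the definition of $\theta$ evaluated at the optimum.

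The remaining and most delicate step is to recover the root equation (\ref{eqn:psi}) for $\hat\theta$. I would substitute $\hat\xi(\theta)$ back into $\ell$ to form the profile log-likelihood, which collapses (using $\sum_i\log(1+\theta y_i)=k\hat\xi(\theta)$) to $\ell_p(\theta)=k\log\theta-k\log\hat\xi(\theta)-k-k\hat\xi(\theta)$. Differentiating in $\theta$ and using $\hat\xi'(\theta)=\frac1k\sum_i y_i/(1+\theta y_i)$ gives the stationarity condition $1/\theta=\hat\xi'(\theta)\bigl(\hat\xi(\theta)^{-1}+1\bigr)$. The algebra that turns this into the stated form is the crux: applying the identity $\frac{y_i}{1+\theta y_i}=\frac1\theta\bigl(1-(1+\theta y_i)^{-1}\bigr)$ lets me write $\theta\hat\xi'(\theta)=1-A$ with $A=\frac1k\sum_i(1+\theta y_i)^{-1}$, whereupon the condition simplifies to $A\,(\hat\xi(\theta)+1)=1$, i.e.\ exactly $\Psi_{k}(\hat\theta)=0$.

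I expect the main obstacle to be manipulative rather than conceptual: matching the profile-derivative condition to the precise bracketed expression in (\ref{eqn:psi}) requires the telescoping identity above and care with the $\xi^{-1}$ terms, and one should separately note that the $\xi\to 0$ (exponential) limit is a removable singularity handled by continuity. A complete argument would also verify that the interior root of $\Psi_{k}$ corresponds to a maximum rather than a saddle of the profile likelihood, e.g.\ by checking the sign of $\ell_p''$ or appealing to concavity of the reparametrized problem, but the existence and characterization of the estimator follow from the stationarity analysis above.
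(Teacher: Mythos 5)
Your proposal is correct and follows essentially the same route as the paper: both maximize the GPD log-likelihood of the $k$ exceedances, use the reparametrization $\theta=\xi/\sigma$, and reduce the two first-order conditions to the pair $\hat{\xi}=\frac{1}{k}\sum_i\log(1+\hat{\theta}y_i)$ and $\frac{1}{k}\sum_i(1+\hat{\theta}y_i)^{-1}=\frac{1}{1+\hat{\xi}}$, whose combination is exactly the root condition $\Psi_k(\hat{\theta})=0$. Your profile-likelihood organization merely makes explicit the algebra the paper compresses into ``substituting (\ref{eqn:appendixxi}) into (\ref{eqn:appendixxirev}),'' and your closing remarks on the $\xi\to 0$ limit and on verifying a maximum address points the paper leaves unexamined.
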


\begin{proof}
The log-likelihood function corresponding to the Pareto distribution fitted to the tail of i.i.d. sequence $X^{n} = \{x_1,x_2,...,x_n\}$ is defined as
    $l_{Y}(\hat{\sigma},\hat{\xi})=\sum_{i=1}^{k}l_{y_{i}}(\hat{\sigma},\hat{\xi})$,
where $Y=\{y_1,y_2,...,y_k\}$; $\hat{\sigma}$ and $\hat{\xi}$ are the estimated scale and shape parameters, respectively; and $l_{y_i}(\hat{\sigma},\hat{\xi})$ is expressed as \cite{evt_04}
\begin{equation}
    \label{eqn:loglokelihoodGPD}
    l_{y_i}(\hat{\sigma},\hat{\xi}) = \log (\frac{1}{\hat{\sigma}}) - (1+ \frac{1}{\hat{\xi}}) \log (1+\frac{\hat{\xi}}{\hat{\sigma}}y_i).
\end{equation}
By taking the partial derivative of (\ref{eqn:loglokelihoodGPD}) with respect to $\hat{\sigma}$ and $\hat{\xi}$, and making them equal to $0$, we obtain the estimation of scale and shape parameters of GPD, respectively. 
Equating the partial derivatives to $0$, we come up with the following equations:

\begin{subequations}
\begin{equation}
\label{eqn:appendixxi}
    \hat{\xi} = \frac{1}{k} \sum_{i=1}^{k} \log (1+\frac{\hat{\xi}}{\hat{\sigma}}y_i),
\end{equation}
\begin{equation}
\label{eqn:appendixxirev}
    \frac{1}{1+\hat{\xi}} = \frac{1}{k} \sum_{i=1}^{k} \frac{1}{1+\frac{\hat{\xi}}{\hat{\sigma}}y_i}.
\end{equation}
\end{subequations}
Substituting (\ref{eqn:appendixxi}) into (\ref{eqn:appendixxirev}), we end up with a new equality as $\Psi_{k}(\hat{\theta})$ function expressed in (\ref{eqn:psi}).
Then, the MLE of $\theta$, i.e., $\hat{\theta}$, is the root of (\ref{eqn:psi}) and accordingly, the estimations of $\xi$ and $\sigma$ are obtained as shown in (\ref{eqn:estimatedparam}).
\end{proof}

The validity of the estimated Pareto model is then assessed by using probability plots including probability/probability (PP) plot and quantile/quantile (QQ) plot. We refer the readers to \cite{MehrniaTWC} and \cite{MehrniaTVT} for more details on the channel modeling methodology of the extreme events at URLLC.

\subsection{Rate Selection}
\label{sec:rateselectionEVT}
Rate selection requires the estimation of the GPD parameters and calculation of the rate as a function of these parameters.
Therefore, in order to determine the $\epsilon$-outage in (\ref{eqn:eoutage}), we estimate $F^{-1}(\varepsilon_{n})$ as
 \begin{equation}
     \label{eqn:gpdinv}
     {F}_{u}^{-1}(\varepsilon_{n}) = u + \frac{\sigma_{u}}{\xi}\big[1 - \varepsilon_{n}^{-{\xi}} \big],
 \end{equation}
where ${F}_{u}^{-1}(.)$ denotes the inverse of $F_{u}(.)$; $u$ is the optimum threshold for GPD; and $\varepsilon_{n}$ quantile is the probability whose associated quantile is of interest \cite{MehrniaTWC}. The shape and scale parameters of GPD in (\ref{eqn:gpdinv}) are the MLE of the GPD parameters fitted to the tail of the training samples $X^{n} = \{x_1,x_2,...,x_n\}$.

\begin{theorem}
Let $\hat{\sigma}$ and $\hat{\xi}$ be the MLE of the scale and shape parameters of the GPD($\sigma$,$\xi$) fitted to the tail distribution of the training samples $X^{n} = \{x_1,x_2,...,x_n\}$, respectively. 
Then, the maximum transmission rate is defined as
\begin{equation}
    \label{eqn:maxrate}
    R_{GPD}(X^n) = \log_{2} \big(1+u + \frac{\hat{\sigma}}{\hat{\xi}}\big[1 - \varepsilon_{n}^{-{\hat{\xi}}} \big]\big),
\end{equation}
where $u$ is the optimum threshold for the channel tail estimation; and $\varepsilon_n$ is the outage probability.
\end{theorem}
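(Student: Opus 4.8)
The plan is to obtain (\ref{eqn:maxrate}) by directly combining the general $\epsilon$-outage rate formula established in (\ref{eqn:eoutage}) with the closed-form GPD quantile in (\ref{eqn:gpdinv}), after substituting the MLE estimates from Theorem~1. There is essentially no heavy computation involved; the derivation is a structured substitution, and the real work is in justifying that the substitution is legitimate in the ultra-reliable regime.

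First, I would recall from (\ref{eqn:eoutage}) that the maximum rate satisfying the reliability constraint (\ref{eqn:pfRxn}) is $R_{\epsilon}(F) = \log_{2}\big(1 + F^{-1}(\epsilon)\big)$, expressed through the $\epsilon$-quantile of the received-power CDF $F$. Since the ultra-reliable regime corresponds to the lower tail of the received powers, I would argue that the relevant distribution is the GPD tail model $F_{u}$ fitted in Section~\ref{sec:channelestimation}, so that the required quantile is $F_{u}^{-1}(\varepsilon_{n})$, with the target outage $\varepsilon_{n}$ playing the role of $\epsilon$.

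Next, I would invoke the closed-form inverse in (\ref{eqn:gpdinv}), namely $F_{u}^{-1}(\varepsilon_{n}) = u + \frac{\sigma_{u}}{\xi}\big[1 - \varepsilon_{n}^{-\xi}\big]$, and replace the true scale and shape parameters by the MLE estimates $\hat{\sigma}$ and $\hat{\xi}$ supplied by Theorem~1. Substituting this expression for $F^{-1}(\epsilon)$ into (\ref{eqn:eoutage}) then produces (\ref{eqn:maxrate}) in a single line.

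The step requiring the most care is not the algebra --- which is the above one-line substitution --- but the justification that the generic CDF $F$ appearing in (\ref{eqn:eoutage}) may legitimately be replaced by the estimated GPD tail $F_{u}$ when evaluating the ultra-reliable quantile. This rests on the limiting result underlying (\ref{eqn:gpddist}) (cited via \cite[Theorem~1]{MehrniaTWC}), which guarantees that the conditional distribution of exceedances below the threshold $u$ converges to a GPD; consequently $F_{u}^{-1}(\varepsilon_{n})$ is the appropriate estimator of the ultra-reliable quantile of $F$, and the accuracy of the resulting rate depends on the quality of the threshold selection and the MLE fit rather than on any further derivation.
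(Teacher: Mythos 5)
Your proposal is correct and follows essentially the same route as the paper: the paper's proof likewise writes the rate as $R(X^n)=\log_2\big(1+\hat{F}^{-1}(\varepsilon_n)\big)$ and obtains (\ref{eqn:maxrate}) by substituting the GPD quantile (\ref{eqn:gpdinv}) with the MLE parameters into the $\epsilon$-outage formula (\ref{eqn:eoutage}). Your added remark justifying the replacement of $F$ by the fitted tail model $F_u$ via the limiting result behind (\ref{eqn:gpddist}) is a reasonable elaboration of a step the paper leaves implicit, but it does not change the argument.
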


\begin{proof}
The transmission rate for a specific training sample $X^{n} = \{x_1,x_2,...,x_n\}$ is defined as \begin{equation}
\label{eqn:Rxngeneral}
    R(X^n) = \log_{2}\big(1+\hat{F}^{-1}(\varepsilon_n)\big),
\end{equation}
where the $\hat{F}^{-1}(\varepsilon_n)$ is the estimate of $\varepsilon_n$-quantile of $F$, for any distribution $F$. The goal is to find $\varepsilon_n$ such that $R(X^n)$ is maximized and (\ref{eqn:pfRxn}) is satisfied. Substituting (\ref{eqn:gpdinv}) in (\ref{eqn:Rxngeneral}), the maximum transmission rate guaranteeing a certain reliability, i.e., error probability $\epsilon$, is defined as (\ref{eqn:maxrate}).
\end{proof}
\subsection{Validation of Selected Rate}
\label{sec:outageprobEVT}
The average outage probability, i.e., the average of error probability due to the outage of the GPD fitted to the extreme values exceeding a given threshold, is obtained by substituting (\ref{eqn:maxrate}) in outage probability equation (\ref{eqn:outageprob}), and taking the expectation as follows:
 \begin{align}
    \begin{split}
    \label{eqn:reliabilitymeasure}
    p_{F}(R_{GPD}(X^{n})) =\\
    E\Big[ P\big[ \log_{2} (1+u+\frac{\hat{\sigma}}{\hat{\xi}}(1-\varepsilon_{n}^{-\hat{\xi}}) > \log_{2}(1+Z) \big] \Big]=\\
    E\Big[ P\big(-\frac{\hat{\sigma}}{\hat{\xi}}(1-\varepsilon_{n}^{-\hat{\xi}}) < u-Z \big) \Big]=\\
    E \Big[\big(1-\frac{{\xi} }{{\sigma}} (\frac{\hat{\sigma}}{\hat{\xi}}(1-\varepsilon_{n}^{-\hat{\xi}})) \big)^{-1/{\xi}} \Big],
\end{split}
\end{align}
where $E[.]$ denotes the expectation function; $\hat{\xi}$ and $\hat{\sigma}$ are the MLE of estimated shape and scale parameters fitted to the training samples, respectively; and $\xi$ and $\sigma$ are the shape and scale parameters of the GPD, given that $F_u(.)$ is perfectly known and GPD is fitted to the sample sequence including the data in training and test phases. By referring to the reliability constraint expressed in (\ref{eqn:pfRxn}), the outage probability obtained from (\ref{eqn:reliabilitymeasure}) should be less than or equal to the $\epsilon$. Therefore, the maximum allowed error probability, $\varepsilon_n$, can be calculated with respect to the targeted PER $\epsilon$, as follows:
\begin{equation}
    \varepsilon_n = \big[ 1-\frac{\hat{\xi}}{\hat{\sigma}} \big(\frac{\sigma}{\xi}\big(1-\epsilon^{-\xi}\big)\big)\big]^{-\frac{1}{\hat{\xi}}}.
\end{equation}
If $R_{GPD}(X^n)$ is a valid transmission rate, the corresponding outage probability $p_{F}(R_{GPD}(X^{n}))$ is expected not to exceed the targeted reliability $\epsilon$, according to constraint (\ref{eqn:pfRxn}).

The complexity of the proposed rate selection framework is $O(n)$, similar to the traditional extrapolation approach \cite{urllc_05}, \cite{urllc_07}-\cite{urllc_08}, since the optimum transmission rate is determined based on $n$ training samples.

\section{Numerical Results}
\label{sec:numerical_results}
The goal of this section is to evaluate the performance of the proposed methodology in determining the maximum transmission rate for URLLC, evaluating the system reliability by using the outage probability metric, comparing it to the traditional extrapolation-based methods for the Rayleigh fading in the estimation of the transmission rate, and illustrating the impact of channel mismatch on the reliability performance of the system. It is worth mentioning that throughout this section, the Rayleigh distribution refers to the channel model under the Rayleigh fading.

To obtain the extrapolated Rayleigh curve, we fit the Rayleigh distribution to the first $10^3$ samples from the received power sequence and upon estimating the Rayleigh parameter, we compute the transmission rate as expressed in \cite[Equation~23]{urllc_05}. Then, we extrapolate the fitted Rayleigh distribution toward the ultra reliable region and compute the corresponding outage probability as expressed in \cite[Equation~26]{urllc_05}, yet assuming that the transmission rate is the Rayleigh rate computed based on \cite[Equation~23]{urllc_05}.

To represent the impact of channel mismatch on the system reliability, we fit the Rayleigh distribution to the received power values and then, compute the maximum rate as expressed in \cite[Equation~23]{urllc_05}. Then, we assume that the channel is no longer Rayleigh, i.e., $F$ in constraint (\ref{eqn:outageprob}) is different than Rayleigh (the true distribution $F$ is estimated by the GPD at different thresholds as it refers to the distribution of the channel tail); yet Tx assumes that the channel is Rayleigh and hence, setting the rate as in \cite[Equation~23]{urllc_05} with $\varepsilon_n$ obtained as expressed in \cite[Equation~27]{urllc_05}. Due to mismatch, the estimated transmission rate w.r.t. the Rayleigh distribution does not converge to $R(X^n)$ computed w.r.t. the true distribution $F$. Additionally, the outage probability is computed such that the channel tail distribution is estimated by the GPD while the system operating at the Rayleigh transmission rate obtained by \cite[Equation~23]{urllc_05}.

The measured channel data were collected within the engine compartment of Fiat Linea at $60$ GHz. The locations of the transmitter and receiver antennas are selected out of the possible locations for the wireless sensors located within the engine compartment, namely locations $5$ and $13$ in \cite[Fig.~$1$]{vehicular_02}, such that the effect of the engine vibration is observed in the received power, as shown in Fig.~\ref{fig:setup}. A Vector Network Analyzer (VNA) (R$\And$S$\textsuperscript{\textregistered}$ ZVA$67$) is connected to the transmitter and receiver via the R$\And$S$\textsuperscript{\textregistered}$ ZV-Z$196$ port cables with maximum $4.8$ dB transmission loss. The horn transmitter and receiver antennas with a nominal $24$ dBi gain and $12^\circ$ vertical beam-width operate at $50$-$75$ GHz. We have captured about $10^{6}$ successive samples for $30$ minutes with a time resolution of $2$ ms. We use \textsc{MATLAB} for the implementation of the proposed algorithm. Due to the existence of a non-stationarity trend among the samples, the measured samples are categorized into two stationary groups according to the engine vibration. Group one includes bunches of $10^3$ successive samples all above $-12$ dBm. If any sample within $10^3$ samples is less than $-12$ dBm, the set of successive samples is assigned to group two.
\begin{figure}[h] 
\centering{
\includegraphics[width=0.6\columnwidth, height=3cm]{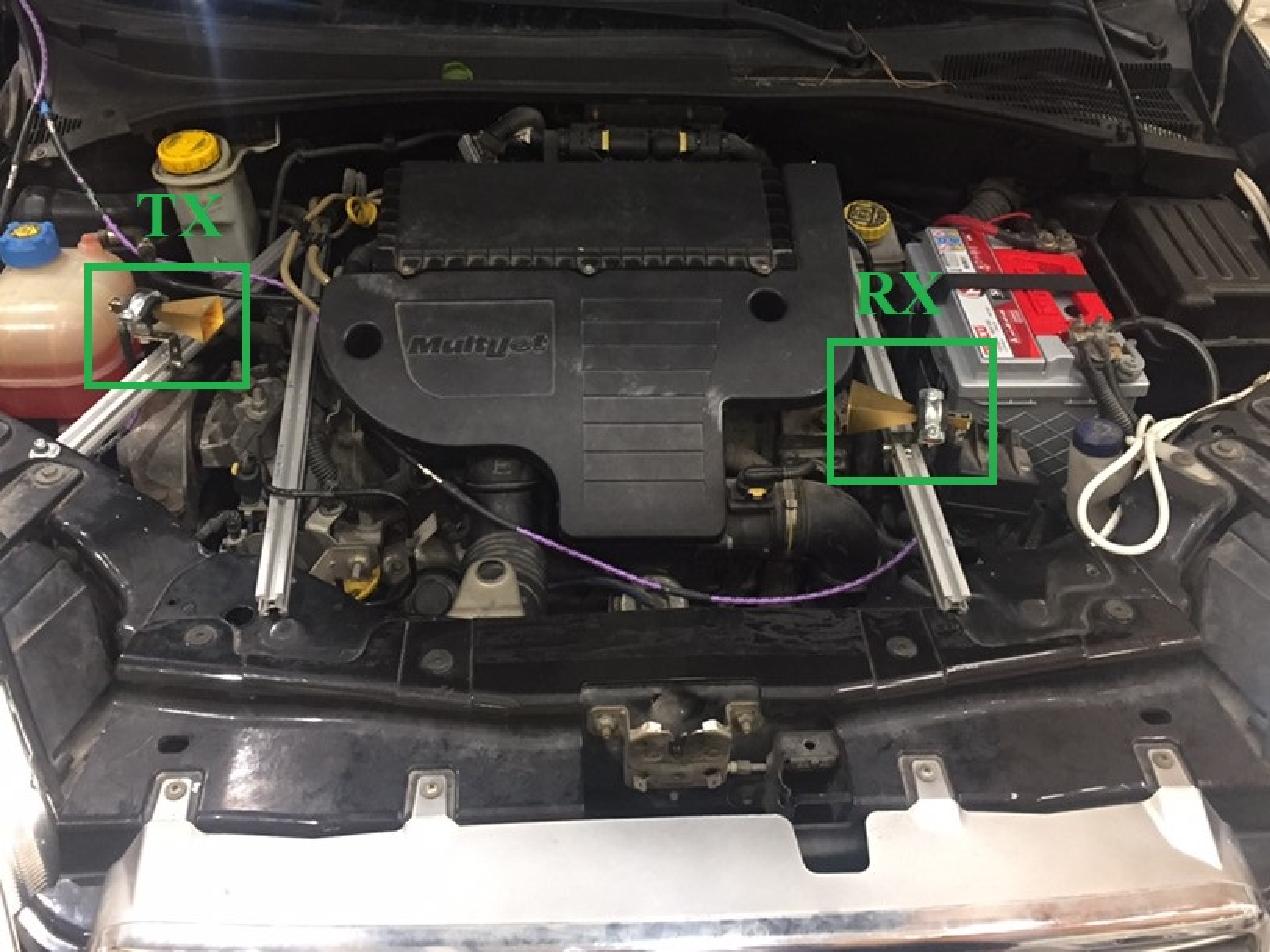}
\caption{Measurement setup with the transmitter (TX) and receiver (RX) antennas located in the engine compartment of Fiat Linea.}
\label{fig:setup}}
\end{figure}

\subsection{Optimum transmission rate}
To determine the maximum transmission rate at URLLC, we plot the normalized $R_{GPD}(X^n)$, denoted by $w$, for different sample numbers. The normalization is performed with respect to the optimal throughput, given that $F_u(.)$ is perfectly known, i.e., the last $R_{GPD}(X^n)$ estimated based on the whole set of data samples, not only the training ones. 

Fig.~\ref{fig:wg1g2} shows the normalized rate $w$ for GPD fitted to the filtered i.i.d. samples of groups $1$ and $2$ at different sample numbers, threshold values, and error rates $\epsilon \in \{10^{-3},10^{-4},10^{-5}\}$. 
Fig.~\ref{fig:wg1_10dbm} shows that the normalized rate selection function $w$ converges faster to $1$, meaning that the transmission rate converges to its optimum value, at lower targeted PER $\epsilon = 10^{-3}$. 
Comparing Fig.~\ref{fig:wg1_10dbm} and Fig.~\ref{fig:wg1_5dbm} at each targeted PER $\epsilon$, the normalized rate in Fig.~\ref{fig:wg1_5dbm} approaches to $1$ significantly faster than that shown in Fig.~\ref{fig:wg1_10dbm}. 
This is expected since by relaxing the threshold from $-10$ dBm to $-5$ dBm, determining GPD and its corresponding transmission rate requires the collection of less samples in the training phase. Additionally, by increasing the threshold, at lower sample numbers, greater portion of received power with higher range and variety is included in the tail and therefore, the GPD parameters and the transmission rate estimated at lower sample number remain valid for higher sample numbers. Fig.~\ref{fig:wg2_25dbm} also demonstrates that the convergence of $w$ function to $1$ is slow for targeted PER $\epsilon = 10^{-5}$.
The fluctuations in Fig.~\ref{fig:wg2_25dbm} is due to the low number of collected samples for estimating the GPD and its corresponding rate function.

Figs.~\ref{fig:wg1_10dbm}, \ref{fig:wg1_5dbm}, and \ref{fig:wg2_25dbm} also illustrate the normalized rate $w$ for GPD compared to the extrapolation based method for groups $1$ and $2$ of data and for different thresholds of $-5$~dBm and $-10$~dBm at group $1$, and $-25$~dBm at group $2$. We observe that the normalized rate under the GPD assumption converges faster or simultaneously to the optimum rate, compared to that of the extrapolated approach for both groups (referring to Figs.~\ref{fig:wg1_10dbm}, \ref{fig:wg1_5dbm}, and \ref{fig:wg2_25dbm}), and under different thresholds (referring to Figs.~\ref{fig:wg1_10dbm} and \ref{fig:wg1_5dbm}). The normalized rate of extrapolated approach in Fig.~\ref{fig:wg1g2} corresponds to all $\epsilon$ values, as the normalized rate under the extrapolated Rayleigh rate-selection function appears to be (almost) independent from the targeted error probability $\epsilon$, implying that the rate-selection function obtained for a given $\epsilon$ is valid for the others as well, as also stated in \cite{urllc_05}.

Figs.~\ref{fig:wg1_10dbm}, \ref{fig:wg1_5dbm}, and \ref{fig:wg2_25dbm} besides show the impact of mismatch between the estimated channel and the true one. The proposed method in Fig.~\ref{fig:wg1g2} performs significantly better than the case where the true channel distribution is actually modeled by the GPD while Rayleigh distribution was assumed.
As also stated in \cite{urllc_05}, the convergence of the rate is heavily affected by the model mismatch, degrading the normalized rate to almost $0$ for all $\epsilon$ values. However, the effect of model mismatch can be reduced by increasing the sample number to the higher values.
\begin{figure*}[ht]
\centering
\captionsetup[subfigure]{labelformat=empty}
     \begin{center}
        \subfloat[(a)]{%
            \label{fig:wg1_10dbm}
            \includegraphics[width=0.6\columnwidth]{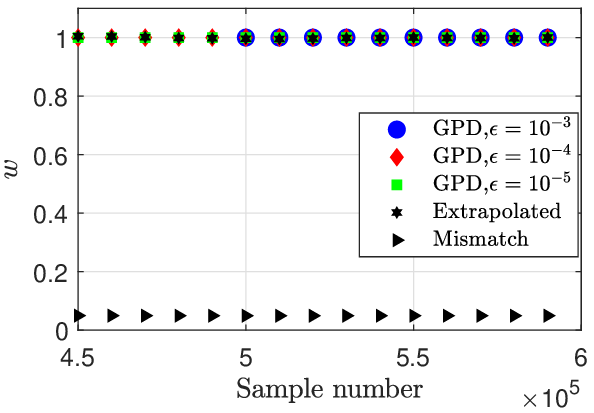}
        }
        \subfloat[(b)]{%
            \label{fig:wg1_5dbm}
            \includegraphics[width=0.6\columnwidth]{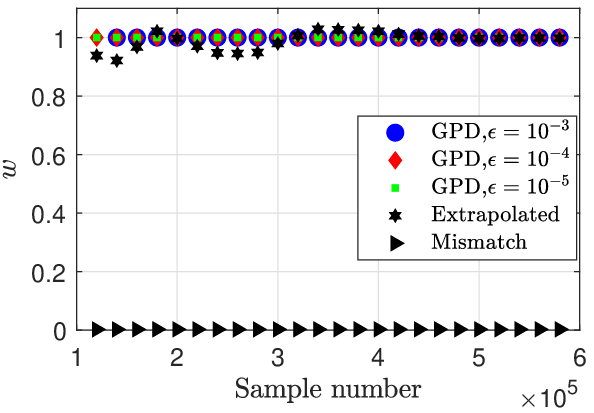}
        }
        \subfloat[(c)]{%
            \label{fig:wg2_25dbm}
            \includegraphics[width=.6\columnwidth]{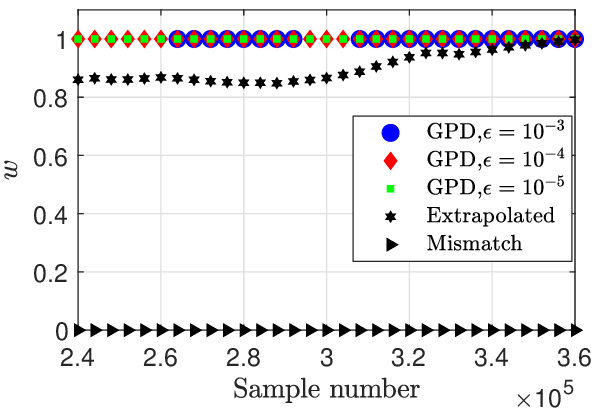}
        }\\
    \end{center}
    \caption{Normalized transmission rate for: (a) group $1$ at $u=-10$ dBm, (b) group $1$ at $u=-5$ dBm, and (c) group $2$ at $u=-25$ dBm. The single \textit{Extrapolated} and \textit{Mismatch} plots correspond to all $\epsilon$ values. \textit{Extrapolated} and \textit{Mismatch} plots are both based on the Rayleigh assumptions.}
   \label{fig:wg1g2}
\end{figure*}

\subsection{Reliability assessment}
Fig.~\ref{fig:comparison} illustrates the reliability performance of the fitted GPD based on the calculated outage probability for a range of thresholds for $2$ groups of data. In this figure, the outage probability in (\ref{eqn:reliabilitymeasure}) is plotted for fixed error probability $\epsilon \in \{10^{-3},10^{-4},10^{-5}\}$ for large enough $n$ to ensure convergence \cite{urllc_05}. This demonstrates that at both groups and for all error probabilities, the reliability constraints in (\ref{eqn:pfRxn}) are never violated. 
\begin{figure}[h]
\centering
\captionsetup[subfigure]{labelformat=empty}
     \begin{center}
        \subfloat[(a)]{%
            \label{fig:comp_rg1}
            \includegraphics[width=0.47\columnwidth]{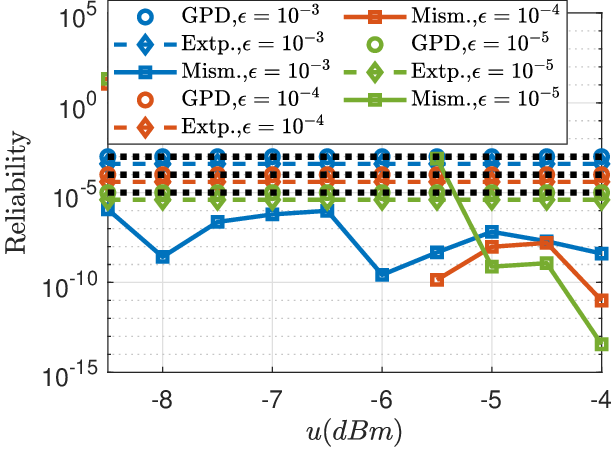}
        }
        \subfloat[(b)]{%
            \label{fig:comp_rg2}
            \includegraphics[width=0.47\columnwidth]{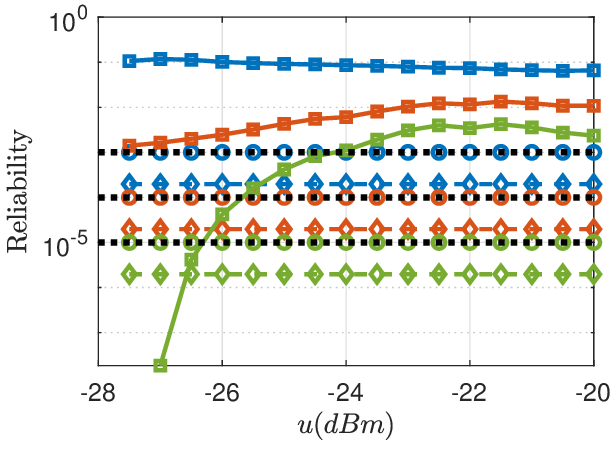}
        }\\ 
        
    \end{center}
    \caption{Reliability measure of the proposed model and the traditional extrapolation approach for: (a) group $1$, and (b) group $2$; The dashed-black lines are the reference lines at targeted error rate $\epsilon \in \{10^{-3},10^{-4},10^{-5}\} $. \textit{Extrapolated} and \textit{Mismatch} plots are both based on the Rayleigh assumptions.}
   \label{fig:comparison}
\end{figure}

Figs.~\ref{fig:comp_rg1} and \ref{fig:comp_rg2} also show the reliability performance of the proposed Pareto model compared to the conventional extrapolation-based method by comparing their outage probabilities for $\epsilon \in \{10^{-3},10^{-4},10^{-5}\}$. Since in the traditional approach, the outage probability is independent of any GPD thresholds, we plot a single outage probability obtained by the extrapolated method at all the threshold values. In Fig.~\ref{fig:comp_rg1}, the reliability of the extrapolated approach at each $\epsilon$ value is almost identical to that of the proposed approach and equal to the desired error rate. However, in Fig.~\ref{fig:comp_rg2}, the achieved reliability obtained by the extrapolated approach differs from the proposed GPD-based approach, as well as the targeted $\epsilon$. It is due to the fact that in Group $2$, the system suffers from more extreme values with lower received power. In such a situation, the channel tail needs to be estimated by the GPD and not the extrapolation. 

Figs.~\ref{fig:comp_rg1} and \ref{fig:comp_rg2} illustrate the effect of mismatch on the reliability of the system through comparing their outage probabilities for $\epsilon \in \{10^{-3},10^{-4},10^{-5}\}$. The proposed method in Fig.~\ref{fig:comp_rg1} has been demonstrated to perform significantly better than the case where the true channel distribution differs from the estimated Rayleigh distribution. The effect of mismatch is more clear for the low $\epsilon$ values and at higher thresholds of the GPD. However, the proposed method in Fig.~\ref{fig:comp_rg2} performs much better for all $\epsilon$ values and thresholds. On the other hand, while with the proposed GPD approach, the reliability is always equivalent to the required error rate $\epsilon$, the reliability of the mismatch case with Rayleigh rate assumption deviates from the desired $\epsilon$ almost at all thresholds.

\section{Conclusions}
\label{sec:conclusions}
In this paper, we propose an EVT-based rate selection framework for ultra-reliable communications. First, we represent the channel by GPD distribution and estimate its parameters. Then, we determine the maximum transmission rate of the estimated channel and validate the selected rate by assessing the resulting error probability. The achieved reliability from the proposed EVT-based approach outperforms the traditional methods based on the extrapolation of average statistic channel models. Additionally, the GPD threshold has significant impact on the minimum required number of samples to achieve a certain reliability order. In the future, we plan to extend the proposed framework for the EVT analysis of the non-stationary processes by implementing a real time algorithm that refines the estimated channel parameters through the regular transmission.



\balance 

\ifCLASSOPTIONcaptionsoff
  \newpage
\fi

\bibliographystyle{ieeetr}
\bibliography{RateSelection.bib}

\end{document}